\definecolor{verde}{cmyk}{.83,.21,1,.08}
\newcommand{\cinf}{{C^\infty({\cal M})}}
\newcommand{\M}{\mathcal M}
\newcommand{\A}{\mathcal A}
\newcommand{\HH}{\mathcal H}
\newcommand{\ds}{\slashed \partial}
\font\mybb=msbm10 at 12pt
\def\bb#1{\hbox{\mybb#1}}
\def\be{\begin{equation}}
\def\ee{\end{equation}}
\def\bea{\begin{eqnarray}}
\def\eea{\end{eqnarray}}
\newcommand{\del}{\partial}
\newcommand{\I}{\mathbb I}
\newtheorem{prop}{Proposition}[section]
\newtheorem{df}[prop]{Definition}
\newtheorem{rem}[prop]{Remark}
\numberwithin{equation}{section}
\begin{document}

\title{Lorentz signature and twisted spectral triples}
\author{A. Devastato$^a$, S. Farnsworth$^b$, F. Lizzi$^{acd}$ and P. Martinetti$^e$}
\date{}
\maketitle
 \begin{center}
    $^a$  INFN sezione di Napoli,\\
$^b$ Max Planck Institute for Gravitational Physics (Albert Einstein Institute), Germany,\\ 
$^c$ Dipartimento di Fisica ``E. Pancini'', Universit\`a di Napoli {\sl Federico II},\\
$^d$  Institut de C\'\i encies del Cosmos (ICCUB),
Universitat de Barcelona,\\
$^e$ Dipartimento di Matematica, Universit\`a di Genova.
  \end{center} 
\footnotetext{\hspace{0truecm}  astinodevastato@gmail.com, shane.farnsworth@aei.mpg.de,
  fedele.lizzi@na.infn.it,  martinetti@dima.unige.it}
\begin{abstract}
We show how twisting the spectral triple of the Standard Model of
elementary particles naturally yields the Krein space associated with
the Lorentzian signature of spacetime. We discuss the associated
spectral action, both for fermions and bosons.
What emerges is a tight link between twists and Wick rotation.
\end{abstract}

\section{Introduction}

Noncommutative differential geometry (NCG) provides a unified
framework from which to describe both Einstein-Hilbert gravity (in
Euclidean signature) and classical gauge theories~\cite{Connes:1994kx}. In particular, it gives an elegant description of the full Standard Model of particle physics in all of its detail, including the Higgs mechanism and neutrino
mixing, as gravity on a certain ``almost commutative manifold''~\cite{Chamseddine:1996zu, Chamseddine:2007oz}. A recent and comprehensive review can be found in~\cite{Walterbook}.

The main benefit of the NCG approach to physics is that it offers a more constrained description of gauge theories than the usual effective field theory approach. Indeed, the added geometric constraints impose a range of successful and phenomenologically accurate restrictions on the allowed particle content of the Standard Model of particle physics~\cite{Chamseddine:2008uq, Boyle:2014wba, Brouder:2015qoa,Devastato:2013wza}. Despite this success, an early estimate for the Higgs mass was also furnished at $m_{H}\simeq 170~\text{Gev}$. This prediction was disfavored by the Tevatron data, and has since been ruled out by the LHC~\cite{Aad,chat2012}. While falling short of an accurate comparison with experiment, this prediction depended on a number of assumptions including the big desert hypothesis, as well as the presence of a scale at which the coupling constants of the three gauge interactions unify. 

In light of the many successes of the NCG construction, one is led to
question the various assumptions that went into the incorrect Higgs
mass calculation, and above all the validity of the big desert
assumption. This concern is particularly pressing as the ``low''
experimental value detected for the Higgs mass causes an instability
(or meta-stability) in the electroweak vacuum at intermediate energies
(see \cite{near-critic} for a recent update), that may be cured by the
addition of a new scalar field suitably coupled to the Higgs - usually
denoted $\sigma$ (e.g. \cite{C.-S.-Chen:2012vn,Elias-Miro:2012ys}). If
the addition of such a scalar field were admissible within the NCG
construction, it would not only stabilize the electroweak vacuum, but
also allow compatibility with the experimentally observed Higgs mass \cite{Chamseddine:2012fk}. Such an outcome is difficult to achieve however as the extra geometric constraints imposed by the NCG formalism severely restrict any allowed addition to the Standard Model. Early attempts in~\cite{Stephan:2009fk,Stephan:2013fk} to generate an extra scalar field within the framework required the adjunction of new fermions. 
More recently, there have been a number of  phenomenologically
viable Standard Model extensions singled out by the NCG
framework which preserve the fermionic sector. This has been achieved
for example in ~\cite{Chamseddine:2013fk,Chamseddine:2013uq}  by
relaxing some of the geometrical constraints, and
in~\cite{Farnsworth:2014vva} by taking full account of the outer
symmetries of the model (let us also mention some proposals to modify
the grading, based on Morita equivalence, developed in \cite{DAndrea:2015aa,Dabrowski:2017aa}, as well as other
modification of the grading in \cite{T.-Brzezinski:2016aa}).
\medskip

In this paper we are mostly concerned with the outcome of an extension of the Standard Model known as \emph{grand symmetry}, proposed
in~\cite{Devastato:2013fk} (see \cite{Devastatao:2014xga} for a
shorter non-technical presentation). It relies on an
enlargement of the Standard Model algebra, and ultimately allows
one to obtain the field $\sigma$ in agreement with the NCG principles, namely 
as an internal part of a connection.
As shown in~\cite{buckley} however, in order to make the grand symmetry extension work, one is required to twist the noncommutative geometry of the Standard Model, in the sense of Connes-Moscovici\cite{Connes:1938fk}. Besides solving some
technical difficulties, the twist also permits one to understand the
breaking of the grand symmetry down to the Standard Model symmetries
in a dynamical process induced by the spectral action.  Our first result establishes the fact that \emph{the required twist  corresponds to a Wick
  rotation.} More precisely, we show in section \ref{sec:twistlorentz}
that the twist turns the inner product of the
Hilbert space of (Euclidean) spinors into a Krein product. The latter
is precisely the inner product associated with spinors on a Lorentzian manifold. In a sense made
precise in \S \ref{subsec:lorentzkrein}, the twist is actually the square of the Wick rotation. 

Our second result concerns the spectral action in the twisted context.
 While the behavior of gauge transformations for twisted spectral
triples has already been
worked out in \cite{Landi:2017aa}, the corresponding gauge invariance of the
spectral action has not yet been addressed.  We  investigate this
question in section
\ref{sec:action}:
\begin{itemize}
\item We begin with the fermionic action $S^F$ in \S
  \ref{subsec:fermact}, showing that the straightforward adaptation to
  the twisted case of the usual formula is indeed invariant under a
  twisted gauge transformation. However, 
  it is not antisymmetric when restricted to the (positive)
  eigenvectors of the chirality operator, unlike the non-twisted case.  This leads us to propose two
  possible definitions of $S^F$ in the twisted context: either by
  restricting to the eigenvectors of the unitary implementing the
  twist, or by considering a Dirac operator, which is Krein-hermitian rather than
  hermitian.
\item The bosonic action $S^B$  is adressed in \S \ref{subsec:bosonact}. We
  show that there is an easy way to rewrite it so that it becomes
  invariant under a twisted gauge transformation. We also investigate 
  the possibility of a Krein adjoint Dirac operator. In both cases, our formulas
  give back the Euclidean bosonic action. 
\end{itemize}

% \noindent \rosso{Our current results provide a hopeful step forward in constructing 
% a fully consistent bosonic action in Lorentzian signature.} 
\noindent The picture that emerges is that  
twisted geometries may provide 
an appropriate framework from which to facilitate the
description of non-Euclidean signatures in NCG. 

We begin in the following section by recalling the main
feature of the twisted spectral triple of the Standard Model.
\smallskip

\section{Twisted spectral geometry for the standard model}
\label{sec:SM}

This section deals with the twisted spectral triple
of the Standard Model of \cite{buckley}. We do not discuss the usual non-twisted version, which can be
found in \cite{Chamseddine:2007oz}; neither do we motivate the
importance of twists in
noncommutative geometry. Let us just recall that twisted spectral triples were introduced in~\cite{Connes:1938fk} in order to build spectral triples from 
algebras which do not exhibit a trace. Quite unexpectedly, they also provide the correct mathematical framework to write the ``beyond
SM'' Grand symmetry model of  \cite{Devastato:2013fk}.

A \emph{twisted spectral triple}
$\left(\mathcal{A},\mathcal{H},D; \rho\right)$
consists of a {*}-algebra $\mathcal{A}$ of bounded operators in
a Hilbert space{\footnote{We denote $T^\dag$ the adjoint of an
    operator $T$ on $\HH$. As usual, we omit the symbol of
    representation  for the algebra and identify
    $\pi(a^*)=\pi(a)^\dag$ with $a^*$. }} $\mathcal{H}$, together with a non-necessarily bounded self-adjoint
operator $D$ on $\mathcal{H}$ with compact resolvent, and an
automorphism $\rho$ of $\A$ such that the twisted commutator 
\begin{equation}
[D,a]_{\rho}:=Da-\rho(a)D\label{eq:1}
\end{equation}
is bounded for any $a$ in $\mathcal{A}$. 
The twisted spectral triple is \emph{even} if there is a 
$\mathbb{Z}{}_{2}$ grading, i.e. an operator $\Gamma$ on $\mathcal{H}$,
$\Gamma=\Gamma^{\dag},\,\Gamma^{2}=1,$ such that 
$\Gamma D+D\Gamma=0$ and $\Gamma a-a\Gamma=0$ for any  $a\in\mathcal{A}$.
It is \emph{real} if there is an
antilinear isometry $J$ (called real structure) which satisfies 
\begin{equation}
J^{2}=\epsilon\mathbb{I},\quad JD=\epsilon'DJ,\quad J\Gamma=\epsilon''\Gamma J\label{eq:4}
\end{equation}
where $\epsilon,\epsilon',\epsilon''\in\left\{ 1,-1\right\} $ define
the $KO$-dimension (see e.g. \cite{Connes:2008kx} for details).

The real structure implements an action of the opposite algebra{\footnote{Identical to $\mathcal{A}$ as a vector space, but with reversed product:
$a^{\circ}b^{\circ}=(ba)^{\circ}$.}} $\mathcal{A}^{\circ}$, obtained by
identifying $Jb^{*}J^{-1}$ with $b^{\circ}\in\A^\circ$ (for any
$b\in\A$), which is
asked to commute with $\mathcal{A}$:
\begin{equation}
[a,JbJ^{-1}]=0\quad\forall\;a,b\in\mathcal{A}.
\label{eq:12}
\end{equation}
This is called the \emph{order zero condition} and it permits one to
define a right action of $\A$ on~$\HH$
\begin{equation}
  \label{eq:14}
  \psi a := a^\circ \psi = Ja^*J^{-1}\psi \quad \forall \psi\in\HH.
\end{equation}
 Another condition that plays an important role is the
\emph{order one condition} \cite{Connes:1996fu}, which for twisted
spectral triples one writes as\cite{buckley,Lett.}
\begin{equation}
[[D,a]_{\rho},JbJ^{-1}]_{\rho_{0}}=0\quad\forall a,b\in\mathcal{A}\label{eq:13},
\end{equation}
where\footnote{
$\rho_{0}$ is the
``natural'' image of $\rho$ in the automorphism group of
$\mathcal{A}^{\circ}$: $\rho_\circ(b^\circ) = (\rho(b))^\circ$.} 
\begin{equation}
\rho_{0}(JbJ^{-1}):=J\rho(b)J^{-1}.\label{eq:17}
\end{equation}
The usual conditions for untwisted spectral triples are retrieved by taking
$\rho$ to be the identity  automorphism $\rho(a)=a$. 

A gauge theory is described - in its non twisted
 version -  by an \emph{almost commutative geometry},
that is the product
\begin{equation}
\mathcal{A}=\cinf\otimes\A_{F},\;\HH=L^{2}(\M,S)\otimes\HH_{F},\;D=\slashed{\partial}\otimes{\mathbb{I}}_{F}+\gamma_E\otimes D_{F}\label{eq:02}
\end{equation}
of the canonical spectral triple
$\left(C^{\infty}(\M),L^{2}(\M,S),\slashed{\partial}\right)$
associated to an (oriented closed) Riemannian spin manifold $\M$ of even dimension\footnote{In this paper we will consider only manifolds of even dimension. The odd case has technical issues which we prefer to ignore. For a full discussion of product geometries see~\cite{Farnsworth:2016qbp} and the references therein.}
$m$, with a finite dimensional spectral triple $
(\A_{F},\HH_{F},D_{F})$.
 Recall that $L^{2}(\M,S)$ denotes the Hilbert space of square integrable spinors on $\M$,
on which $\cinf$  acts as 
\begin{equation}
  \label{eq:19}
  \pi(f)=f\left(
    \begin{array}{cc}
      \I_{\frac n2}& 0 \\ 0&   \I_{\frac n2}
    \end{array}\right)\quad\quad f\in\cinf,
\end{equation} 
with  $n=2^{\frac m2}$
the dimension of the spin representation. The Dirac operator is 
\begin{equation}
\slashed{\partial}=-i\sum_{\mu=1}^{m}\gamma^{\mu}_E\nabla_{\mu}^{S}
\quad\text{ where }\quad \nabla_{\mu}^{S}=\del_{\mu}+\omega_{\mu}^{S}
\label{eq:80}
\end{equation}
with $\gamma_E^{\mu}={\gamma_E^{\mu}}^{\dagger}$
the selfadjoint Euclidean-signature Dirac matrices and $\omega_{\mu}^{S}$ the spin connection.
 The grading
 \begin{equation}
\gamma_E = \text{diag} (
       \I_{\frac n2},  - \I_{\frac n2})
\label{eq:86}
            \end{equation}
is the product of the Dirac matrices (it is usually called $\gamma^5$ in the physics literature).

To describe the standard model, under natural assumptions on the representation of the algebra, it is shown in \cite{Chamseddine:2008uq}
that the finite dimensional algebra in \eqref{eq:02} has to be
\begin{equation}
\A_F:=\bb C\oplus\bb H\oplus M_{3}(\bb C),\label{eq:115}
\end{equation}
acting on the finite
dimensional Hilbert space.
\begin{equation}
\mathcal{H}_{F}=\mathcal{H}_{R}\oplus\mathcal{H}_{L}\oplus\mathcal{H}_{R}^{c}\oplus\mathbb{\mathcal{H}}_{L}^{c}=\bb C^{96}\label{eq:56}
\end{equation}
where ${\cal H}_R={\mathbb C}^8\times {\mathbb C}^3$ is spanned by the $N=3$ generations of
8 right-handed fermions
(electron, neutrino, up and down quarks with three colors each), ${\cal
H}_L$ stands for left fermions, and the exponent $^c$ is for the antiparticles.
The finite dimensional Dirac operator $D_{F}$ is a $96\times96$ matrix whose entries are the Yukawa couplings of fermions,
the Dirac and Majorana masses of neutrinos, the Cabibbo matrix and the mixing matrix
for neutrinos.

The twisted spectral triple  of the Standard Model is obtained by
making
\begin{equation}
  \cinf\otimes {\bb C}^2\simeq \cinf\oplus\cinf
\end{equation}
act on $L^2(\M,S)$ as 
\begin{equation}
  \label{eq:43}
  \pi((f,g)):=\left(
 \begin{array}{cc}  f\I_{\frac n2}&0\\ 0& g\I_{\frac n2}  
    \end{array}
\right) \quad\quad \forall (f,g)\in \cinf\oplus \cinf,
\end{equation}
choosing as automorphism
\begin{equation}
  \label{eq:44}
  \rho((f,g)) = (g,f) \quad\quad \forall (f,g)\in\cinf\oplus\cinf.
\end{equation}
The complete twisted spectral triple thus
consists in
\begin{equation}
  \label{eq:42}
  \left((\cinf\otimes \A_F)\otimes {\bb C}^2, L^2(\M,S)\otimes {\bb C}^{96},
  D=\ds\otimes \I_{96} + \gamma_E\otimes D_F;\, \rho\right),
\end{equation}
where the ``doubled'' algebra
 $(\cinf\otimes \A_F)\otimes {\bb C}^2$
acts on $\HH= L^2(\M,S)\otimes {\bb C}^{96}$ as in the Standard Model,
except that the representation \eqref{eq:19} of $\cinf$ is substituted
with the representation \eqref{eq:43} of $\cinf\otimes{\mathbb C}^2$. 
The Dirac operator
is unchanged.

The grading $\Gamma$ and the real structure $J$ are as in the non
twisted case, namely  
\begin{equation}
\label{eq:046}
\Gamma=\gamma_E\otimes\gamma_{F}\quad\text{ where
}\quad\gamma_{F}:=\text{diag }(
\mathbb{I}_{8N}, -\mathbb{I}_{8N}, -\mathbb{I}_{8N},\mathbb{I}_{8N}),
\end{equation}
and
\begin{equation}
J=\mathcal{J}\otimes J_{F} \quad\text{ where }\quad  J_{F}:=\left(\begin{array}{cc}
0 & \mathbb{I}_{16N}\\
\mathbb{I}_{16N} & 0
\end{array}\right)cc
\label{eq:46}
\end{equation}
with ${\cal J}=i\gamma^{0}\gamma^{2}cc$
the charge conjugation on $L^{2}(\M,S)$% , and as the finite matrix 
(with $cc$ the complex conjugation).

%\bigskip
The fermionic fields are elements of the Hilbert space $\HH$. The bosonic fields
are obtained by the so-called \emph{twisted fluctuations of $D$ by
  $\A$}, which amount to substituting $D$  with \cite{buckley}
\begin{equation}
D_{A_{\rho}}:=D+A_{\rho}+\epsilon'\,JA_{\rho}J^{-1}\label{eq:3}
\end{equation}
where 
$A_\rho$ is an element of the set of twisted generalized one forms~\cite{Connes:1938fk}
\begin{equation}
\Omega_{D}^{1}(\mathcal{A},\rho):=\left\{ \sum_{i}a_{i}[D,b_{i}]_{\rho},\;a_{i},b_{i}\in\mathcal{A}\right\} .\label{eq:2}
\end{equation}
\begin{rem}
\label{rem:intro1}
The twisted fluctuations of the
Dirac operator, which were initially introduced by ana\-logy with the non twisted case~\cite{buckley}, have now in~\cite{Landi:2017aa} been placed on the same rigorous footing as
Connes' original ``fluctuations of the metric'' \cite{Connes:1996fu},
namely as a way to export a real twisted spec\-tral triple to a Morita
equivalent algebra. In particular, in  case of self-Morita equivalence,
one obtains formula \eqref{eq:3}.
Additionally, a gauge
transformation is implemented as in the non-twisted case, namely as a
change of connection in the bimodule that implements Morita
equivalence. This yields formula \eqref{eq:6}, which is our starting point in this paper.

Note however, that
there is an important difference between the twisted and the
non-twisted cases: while usual fluctuations preserve the
selfadjointness of the Dirac operator, twisted-fluctuations may not. In~\cite{Landi:2017aa} this issue was addressed working out the necessary and sufficient
conditions, such that the unitary $u$ which implements the twisting automorphism (in
case the latter lifts to an inner automorphism of ${\cal B}(\HH)$) must satisfy in order to preserve selfadjointness.  Interestingly, there are other solutions beyond the obvious ones 
(i.e.\  $u$ invariant under the twist). 

In this paper we provide an alternative solution: instead of trying
to preserve selfadjointness, we investigate whether there is a ``more natural''
property  preserved under a twisted
fluctuation. We find one: selfadjointness with
respect to the inner product induced by the twist. Unexpectedly, in
the
case of the twisted spectral triple of the Standard Model, the
induced product is the
Krein product of Lorentz spinors.
It is quite remarkable that the Lorentz
structure emerges from the algebraic properties of the Euclidean spectral triple
and its twist (all the more that the later is unique, under the
condition that the fermionic sector of the theory is untouched by the
twist\cite{Landi:2017aa}).
\end{rem}

When $D_{A_{\rho}}$ is selfadjoint, we call
it a \emph{twisted-covariant Dirac operator}. One then shows \cite{buckley} 
that 
$(\mathcal{A},\mathcal{H},D_{A_{\rho}};\rho)$ is a real twisted spectral
triple, with the same real structure and $KO$-dimension as $(\mathcal{A},\mathcal{H},D;\rho)$.

A gauge transformation for a twisted spectral
triple  \cite{Landi:2017aa}  is implemented by
the simultaneous action on $\HH$ and ${\cal L}(\HH)$ (the space of linear operators on
$\HH$) of
the group of unitaries of $\A$,
\begin{equation}
  \label{eq:10}
  {\cal U}(\A):=\left\{ u\in\A, u^*u = uu^*=\bb I \right\}.
\end{equation}
The action on $\HH$ follows from 
the adjoint action of $\A$ (on
the left via its representation, on the right by
\eqref{eq:14}), that is
\begin{equation}
  \label{eq:24}
  \text{Ad}(u) \psi = u\psi u^* = u JuJ^{-1}\psi \quad \forall
  \psi\in\HH,\, u\in {\cal U}(\A).
\end{equation}
The action on ${\cal L}(\HH)$ is defined as 
\begin{equation}
T\mapsto \text{Ad}(\rho(u)) \,T \, \text{Ad}(u^*) \qquad \forall T\in {\cal L}(\HH),
\label{eq:59}
\end{equation}
where
\begin{equation}
  \label{eq:34}
  \text{Ad}(\rho(u))= \rho(u) J\rho(u)J^{-1}.
\end{equation}
 In particular, for $T=D_{A_\rho}$ a twisted covariant Dirac operator
 \eqref{eq:3},
 one has (see details in appendice, and also \cite{Landi:2017aa})
 \begin{equation}
    \label{eq:6}
    \text{Ad}(\rho(u))\, D_{A_\rho} \text{Ad}(u^*) = D_{A_\rho^u}
  \end{equation}
  where
  \begin{equation}
    \label{eq:7}
    A_{\rho}^u := \rho(u)A_\rho u^{*}+\rho(u)\left[D,u^{*}\right]_{\rho}.
  \end{equation}
The map $A_\rho\mapsto A_\rho^u$
is a twisted version of the usual law of transformation of the gauge
potential in noncommutative geometry \cite{Connes:1996fu}. 
\bigskip

The interest of twisting the
spectral triple of the Standard Model is that whereas the part $D_R$ 
of the operator  $D_F$ that contains the Yukawa
coupling $k_R$ of the right handed neutrino is transparent to usual
inner fluctuations,  
\begin{equation}
  [D_{R},a]=0 \quad  \forall a\in\A_F,
\label{eq:54}
  \end{equation}
it is not transparent to
  twisted inner fluctuations \eqref{eq:3},
  \begin{equation}
[D_R,a]_\rho\neq 0 \quad \text{ for some }a\in\A_F\otimes{\mathbb C}^2.
\label{eq:70}
  \end{equation}
That $D_R$ did not fluctuate remained almost unnoticed until the observation in 
\cite{Chamseddine:2012fk} that turning the (constant) $k_R$ 
into a field $k_R\sigma$ provides precisely the extra scalar field required to
stabilize the electroweak vacuum, and also provides a way of naturally
accommodating the mass of the Higgs boson.
The non-fluctuation of $D_R$ by internal symmetries can be traced back to the first-order
condition (as already noticed in  \cite{Connes:2010fk}).
To justify the substitution $k_R\rightarrow k_R\sigma$, various
solutions have been proposed:
\begin{itemize}
\item[-] Make the first-order condition more flexible, as
  investigated in \cite{Chamseddine:2013fk, Chamseddine:2013uq}, with
  phenomenological consequences in \cite{Chamseddine:2015aa} (see also
\cite{U.-Aydemir:2015aa}).

\item[-] Attempt to fluctuate the $\sigma$ field using the outer
  symmetries of the theory, as initiated in~\cite{Farnsworth:2014vva},
  leading to a minimal and phenomenologically viable Standard Model
  extension.

\item[-]  Double the algebra and twist the first order condition, as
  explained above.
\end{itemize}

\section{Twist and Lorentz Structure}
\label{sec:twistlorentz}

We show that when the automorphism $\rho$ in a
twisted spectral triple $(\A, \HH, D; \rho)$ is inner,
then there exists a natural $\rho$-twisted inner product on
$\HH$. Furthermore, for the twisted geometry of the
Standard Model \eqref{eq:02}, this
inner-product is a Krein product of  Lorentzian spinors. 

\subsection{Twisted inner product}
\label{subsec:twistinner}
Let $\HH$ be an Hilbert space with inner product
  $\langle \cdot , \, \cdot \rangle$, and $\rho$ be an automorphism of   $\cal B(\HH)$.
\begin{df}
\label{de:twist}
A  $\rho$-twisted inner product
  $\langle\cdot ,\, \cdot \rangle_\rho$ is an inner product on $\HH$ such
  that
   \begin{equation}
    \langle\Psi,\mathcal{O}\Phi\rangle_{\rho}=\langle\rho(\mathcal{O})^\dag\Psi,\Phi\rangle_{\rho}\qquad
    \forall {\cal O}\in{\cal B}(\HH),\; \Psi,\,\Phi\in\HH,
    \label{inner product}
  \end{equation}
  where $\rho({\cal
    O})^\dag$ is the adjoint of $\rho({\cal
    O})$ with respect to the initial Hilbert inner product
  $\langle\cdot,\cdot  \rangle$.
\end{df}
\noindent  We denote 
\begin{equation}
{\cal O}^+:= \rho({\cal O})^\dag
\label{eq:26}
\end{equation}
the  adjoint of a bounded operator ${\cal O}$ with respect to the $\rho$-twisted
inner product. For short we call the later the $\rho$-product, and ${\cal
  O}^+$ the $\rho$-adjoint of ${\cal O}$.
An operator $\mathcal{O}$ (resp. $U$) on $\mathcal{H}$ is said
\textit{$\rho-$hermitian} (\textit{$\rho-$unitary})
if it is selfadjoint (unitary) with respect to the $\rho$-product:
${\cal O}^+={\cal O}$, $U^+U=UU^+=\I$. In terms of the initial
Hilbert product on $\HH$, this reads
\begin{equation}
\mathcal{O}=\rho(\mathcal{O})^\dag,\quad \rho(U)^\dag U = U\rho(U)^\dag=\I.\label{eq:hermitianity}
\end{equation}

If  $\rho$ is an inner automorphism of ${\cal B}(\HH)$,  such that there
exists a unitary operator $R$ on $\HH$ satisfying
\begin{equation}
  \label{eq:27}
  \rho({\cal O}) = R{\cal O} R^\dag \qquad \forall {\cal O}\in{\cal B}(\HH),
\end{equation}
then a natural $\rho$-product is 
\begin{equation}
\langle\Psi,\Phi\rangle_\rho=\langle\Psi, R\Phi\rangle=\langle R^\dag\Psi, \Phi\rangle.
\label{eq:28}
\end{equation}
Indeed, for any bounded operator $\cal O$ on $\HH$ one checks that 
\begin{align}
  \label{eq:29}
  \langle\Psi, {\cal O}\Phi\rangle_\rho&=\langle\Psi, R{\cal
                                         O}\Phi\rangle =  \langle
                                         {\cal O}^\dag R^\dag\Psi, \Phi\rangle
= \langle {\cal O}^\dag R^\dag\Psi, R^\dag R\Phi\rangle\\
&=\langle R{\cal O}^\dag R^\dag \Psi, R\Phi\rangle=\langle \rho({\cal
  O}^\dag )\Psi, \Phi\rangle_\rho=\langle\rho({\cal O})^\dag \Psi, \Phi\rangle_\rho=\langle {\cal O}^+\Psi, \Phi\rangle_\rho
\end{align}
where we used that an inner automorphism is necessarily a
$^*$-automorphism, that is
\begin{equation}
\rho(O)^\dag = (R{\cal O}R^\dag)^\dag = R{\cal O}^\dag R^\dag =\rho({\cal O}^\dag).
\label{eq:9}
\end{equation}
Notice that $R$ is both unitary (by definition) and $\rho$-unitary:
one has $\rho(R) = RRR^\dag=R$ so that 
$\rho(R)^\dag R = R^\dag R =\I=R\,\rho(R)^\dag$.

All the definitions above
extend to any (i.e. non
necessarily bounded) densely defined linear
operator $T$ on $\HH$ whose domain is globally invariant under left
multiplication by $R^\dagger$: if
$R^\dagger\psi \in \text{Dom}(T)$ for any  $\psi\in \text{Dom}(T)$,
then one defines the
action of $\rho$ on $T$ as
\begin{equation}
\rho(T) \Psi := R \,T \,R^\dagger\Psi.
\label{eq:15}
\end{equation}

If in addition $T$ is adjointable (i.e. the operator $T^\dagger$ defined by $\langle T^\dagger \eta,
\xi\rangle := \langle \eta, T\xi\rangle$ for all $\xi\in
\text{Dom}(T)$, is densely defined), then so is $RTR^\dagger$ and by
\eqref{eq:26} one defines the $\rho$-adjoint of $T$ as
$T^+:= \rho(T)^\dagger$.
The operator $T$ is $\rho$-hermitian when $T^+=T$.

The extension of an inner automorphism $a\to uau^*$ of $\A$ to an automorphism of ${\cal B}(\HH)$ is not unique (just consider
two distinct unitaries $R_1$, $R_2$ in ${\cal B}(\HH)$ such that $R_1
a R_1^\dag = R_2 a
R_2^\dag$ for any $a\in \A$). Any such extension defines 
an automorphism of $\A_\circ$: 
\begin{equation}
  \label{eq:38}
  \rho(Ja^*J^{-1}) = R\, J a^* J^{-1} R^\dag\quad \forall a\in\A.
\end{equation}
We say that an inner automorphism is compatible with
  $J$ if it admits an extension such that \eqref{eq:38}
  agrees with $\rho_\circ\in\text{Aut}(\A_\circ)$ defined in
  \eqref{eq:17}. More precisely:
\begin{df}   
\label{def:compreal}
 Given a real spectral triple $(\A, \HH, D)$,  an inner automorphism $\rho$ of $\A$ is compatible with
  the real structure $J$ if there exists a unitary $R\in\cal B(\HH)$
  such that
  \begin{align}
\rho(a) = RaR^\dag \quad \text{ and }\quad
    \label{eq:39bis}
     J\,R\, a^*\,R^\dag J^{-1} = R J a^* J^{-1} R^\dag \qquad \forall a\in\A.
  \end{align}
\end{df}

\noindent This condition  is verified in particular when the inner
automorphism can be implemented by a unitary $R$ such that

\begin{equation}
  \label{eq:40}
  J R = \pm  RJ .
\end{equation}

\subsection{Lorentzian signature and Krein space} 
\label{subsec:lorentzkrein}

    In definition \ref{de:twist} we do not require the $\rho$-product
    to be positive definite. Since $R$ is unitary, one has that
    $\langle\cdot ,\, \cdot \rangle_\rho$ is non-degenerate. If in
    addition we impose that $R$ is selfadjoint and different from the
    identity (i.e. the automorphism $\rho$ is not the trivial one),
    then $R$ has eigenvalues $\pm 1$. The two corresponding eigenspaces
    $\HH_+, \HH_-$ are such that $\HH= \HH_+ \oplus \HH_-$ and the
    $\rho$-product is positive definite on $\HH_+$, negative definite
    on $\HH_-$. In other words, a space $\HH$ equipped with
    the product $\langle \cdot, \cdot\rangle_\rho$ is a Krein space. Furthermore, the
    operator $R$ is a fundamental symmetry, that is it satisfies
    $R^2=\I$ and the inner product $\langle\cdot , R\cdot\rangle_\rho$
    is positive definite  on $\HH$ (in our case, this is simply the
    Hilbert product one started with).

In the twisted spectral triple of the Standard Model, the
automorphism $\rho$ is the flip \eqref{eq:44}. It is implemented on $L^2(\M,S)$ by
the adjoint action of the selfadjoint unitary operator
\begin{equation}
R=\begin{pmatrix} 0 &\mathbb 1_2\\
\mathbb 1_2 & \mathbb 0 \end{pmatrix}. 
\label{eq:53}
\end{equation}
 This matrix has eingenvalues $\pm 1$, hence $\HH$ equipped with the
 $\rho$-product is a Krein space. 
The Euclidean Dirac matrices in the chiral basis are,
for $\mu=0,j$ with $j=1,2,3$,
\begin{equation}
  \label{eq:25}
  \gamma^\mu_E=\left(
    \begin{array}{cc}
      0& \sigma^\mu\\ \tilde\sigma^\mu &0 
    \end{array} \right)\quad \text{ where } \sigma^\mu = \left\{\I_2,
      -i\sigma^i\right\},\; \tilde\sigma^\mu=\left\{\I_2, i\sigma^i\right\}
\end{equation}
where $\sigma_j$, $j=1,2,3$ are the Pauli matrices. Thus
$R$ is nothing but
$\gamma^0_E$ and the $\rho$-product~\eqref{inner product} is now the usual inner product
of quantum field theory in Lorentz signature, where instead of
$\psi^\dagger$ it appears $\bar\psi=\psi^\dagger \gamma_E^0$:
\be \langle\psi,\phi\rangle_\rho=\langle \psi, \gamma_E^0\phi\rangle
= \int d^{4}x\psi^{\dagger}\gamma_E^{0}\phi:=\int
d^{4}x\bar\psi\phi.
\label{eq:Minkosky} \ee
 
Furthermore, one has
\begin{equation}
  \label{eq:35}
  \rho(\gamma_E^0)=(\gamma_E^0)^3 = \gamma^0_E,\quad
  \rho(\gamma_E^j)=\gamma_E^0 \gamma_E^j \gamma_E^0= -\gamma^j_E.
\end{equation}
The twist therefore performs some sort of Wick rotation
  whereby the sign of the time-component Dirac matrix is changed with respect to
  the spatial directions. The matrix $R$ is of course expressed in  a
  particular basis, and its action on the Clifford algebra generated
  by the $\gamma$'s  singles out one particular direction, which we
  identify with time.  Then it makes sense to define the integral on a
  time slice and have fields normalized only for the space integral,
  which is what is commonly done. 
However, the $\rho(\gamma^i_E)$'s are
  not the Lorentzian signature (i.e. Minkowskian) gamma matrices,
  \begin{equation}
    \label{eq:63}
    \gamma^{0}_M =  \gamma^{0}_E\;,\quad \gamma_{M}^{j}= i\gamma_{E}^{j}
    \quad j=1,2,3.
  \end{equation}
Viewing the Wick rotation as the operator:
$W:\gamma^\mu_E\to\gamma^\mu_M$, that is
\begin{equation}
 W(\gamma^0_E)=\gamma^0_E,\quad
 W(\gamma^j_E)=i\gamma^j_E,\label{eq:16}
 \end{equation}
one has that the twist \eqref{eq:35} is the
 square of the Wick rotation
 \begin{equation}
   \label{eq:64}
   \rho(\gamma^0_E)= W(W(\gamma^0_E)),\quad  \rho(\gamma^j_E)= W(W(\gamma^j_E)).
 \end{equation}

The Euclidean Dirac matrices are selfadjoint for the Hilbert product
of $L^2(\M,S)$, but (except for $\gamma^0_E$) not $\rho$-hermitian since from
\eqref{eq:26} and \eqref{eq:35} one has
\begin{equation}
  \label{eq:65}
  (\gamma^j_E)^+ = \rho(\gamma^j_E)^\dag= -{\gamma^j_E}^\dag.
\end{equation}
On the contrary, the Minkowskian gamma matrices (except $\gamma^0_M$)  are not selfadjoint for
the Hilbert product since \eqref{eq:63} yields 
$(\gamma^j_M)^\dag = - \gamma^j_M$; but they are $\rho$-hermitian since
\begin{equation}
  \label{eq:67}
  \rho(\gamma^j_M) = i\rho(\gamma^j_E) = -i\gamma^j_E= -\gamma^j_M,
\end{equation}
so that
\begin{align}
(\gamma_M^j)^+= \rho(\gamma^j_M)^\dag  &=(\gamma_0 \gamma_M^j \gamma_0)^\dag
 =\gamma_0 (\gamma_M^j)^\dag\gamma_0 = -\gamma_0 \gamma_M^j\gamma_0 =-\rho(\gamma_M^j)=\gamma^j_M.
\label{eq:68}
  \end{align}
The ``temporal'' gamma matrix $\gamma^0:=\gamma^0_E=\gamma^0_M$ is both
selfadjoint and $\rho$-hermitian.
 
The twist naturally defines a Krein structure, while maintaining in
     the background the Euclidean structure.  Applications of
     Krein spaces to noncommutative
 geometry framework  have been recently
    studied in\cite{VanDenDungen,BiziBrouderBesnard} as well as in
    \cite{Franco:2012fk,Franco:2014fk,Franco:2012fkbis} (see reference
    therein for earlier attempts of adapting Connes noncommutative
    geometry to the Minkowskian signature).

\begin{rem}
\label{rem:intro2}
Wick rotations are a delicate issue in the presence of fermions, and have been discussed in detail in~\cite{DAndrea:2016aa} (see also~\cite{Kurkov:2017wmx}). There are intimate relations with the projection necessary to remove the extra degrees of freedom given by fermion doubling~\cite{Lizzi:1996vr}. In the present paper we concentrate on the issue of the change of signature, and how it relates with the twist. The transition from an Euclidean signature theory which contains fermions involves changes of degrees of freedom, usually a doubling, as occurred for example in~\cite{Osterwalder:1973zr,vanNieuwenhuizen:1996tv}. In the present approach there is actually a quadruplication of degrees of freedom, obtained as a product of two duplications. One of these  disappears after Wick rotation, while the other must be projected out. The nature of the duplications seen in this work differ slightly from those cited above.
\end{rem}

\section{Actions}
\label{sec:action}

For an almost commutative geometry \eqref{eq:02} with grading $\Gamma$, the fermionic action is \cite{Chamseddine:2007oz}
\begin{equation}
  \label{eq:11}
 S^F(D):=  \langle J\tilde\psi, D\tilde \psi\rangle
\end{equation}
where $\psi$ is a vector in the even part of the Hilbert space 
\begin{equation}
  \label{eq:55}
  \HH_e:=\left\{\psi\in\HH, \gamma\psi = \psi\right\}
\end{equation}
which, seen as an operator on the Fock space, is a
Grassmanian variable $\tilde\psi$.
The bosonic action is \cite{Chamseddine:1996zu}
\begin{equation}
S^B(D):=\text{Tr}\,f\left(\frac{D^2}{\Lambda^2}\right)
\label{eq:41}
\end{equation}
where $\Lambda$ is an energy cutoff and $f$ a smooth approximation of
the characteristic function of the interval~$[0,1]$. Both actions are invariant
under a gauge transformation \cite{Chamseddine:2007oz}, that is the simultaneous transformation 
\begin{equation}
D\mapsto \text{Ad}(u) \,D\, \text{Ad}(u^*) \quad \text{ and } \quad  
\psi\longmapsto  \text{Ad}(u)\psi.
\label{eq:61}
\end{equation}
 For the almost
 commutative geometry \eqref{eq:02} with $\M$ a Riemannian spin
 manifold and $(\A_F, \HH_F, D_F)$ as described in \S \ref{sec:SM}, $S^F$
 yields the fermionic action of the Standard Model, while the asymptotic expansion
 of $S^B$ yields the bosonic part, including the Higgs, together with the
 Einstein-Hilbert action (in Euclidean signature) and an extra Weyl
 term.

For a twisted spectral triple, both actions \eqref{eq:11} and 
 \eqref{eq:41} are well defined, but their invariance under a gauge
 transformation, that is - as explained at the end of \S \ref{sec:SM} - the 
 simultaneous transformation
\begin{equation}
  \label{eq:62}
  D\mapsto D_{A_\rho^u}:=\text{Ad}(\rho(u)) \,D\, \text{Ad}(u^*) \quad \text{ and } \quad  
\psi\longmapsto  \text{Ad}(u)\psi,
\end{equation}
is not guaranteed:
\begin{itemize}
\item  $S^F(D)$ has no reason to be invariant, unless $u$
  is invariant under the twist: $u=\rho(u)$;

  \item  $S_B(D_{A_\rho^u} )$ is defined only if 
    $D_{A_\rho^u}$ is selfadjoint, or at least normal (which guarantees
    that $f\left(\frac{D_{A_\rho^u}^2}\Lambda\right)$ makes sense by the spectral theorem). But as has already been discussed in remark
    \ref{rem:intro1}, for an arbitrary unitary $u$, the operator $D_{A_\rho^u}$ has no reason to be
    selfadjoint, nor even normal. Notice however that it has compact
    resolvent, being
    a perturbation by bounded operator of the selfadjoint compact-resolvent
    operator $D$. Thus as soon as it is normal, the trace in
    \eqref{eq:41} is finite for any value of the cutoff
    $\Lambda$ and the bosonic action $S^B(D_{A_\rho^u})$ is then well defined. 
\end{itemize}

So as to ensure that both  the fermionic and bosonic actions remain well defined and invariant under  gauge transformations, one may restrict the range of acceptable unitaries to those such that  \eqref{eq:11} is invariant under \eqref{eq:61} and such that $D_{A_\rho^u}$ is normal. It is
likely however that there is no other solution other than the unitaries invariant
under the twist,  which would drastically limit the interest of the whole construction: why introducing a twist only to ignore it at the end? Another
solution, and the one that we explore in the following, is to 
take advantage of the twisted inner
product \eqref{eq:28} so as to modify the definitions of $S^F$ and
$S^B$, in order to obtain actions invariant under \eqref{eq:62} for \emph{any} unitary~$u$.

\subsection{Fermionic Action}
\label{subsec:fermact}

\noindent To build a fermionic action for a twisted spectral triple that is
invariant under the gauge transformation \eqref{eq:62}, it suffices to substitute the
Hilbert inner product in \eqref{eq:11} with the $\rho$-product.
% In
%order to do so we first prove some technical results, and then discuss possible fermionic actions.
%
%The first proposition shows that the immediate generalization of the usual fermionic action obtained by twisting the inner product is still invariant under the (twisted) gauge transformations.
\begin{prop}
\label{Prop:fermion1} Let $(\A, \HH, D;\rho)$ be a real twisted spectral triple with
  $\rho$ an inner automorphism of ${\cal B}(\HH)$ compatible with the
  real structure in the sense of Def.~\ref{def:compreal}. Then 
\begin{equation}
  \label{eq:47}
  \mathfrak A_D^\rho(\psi, \phi) = \langle J\psi, D\phi\rangle_\rho\quad
  \forall \psi, \phi\in\text{Dom} \,D
\end{equation}
 is a bilinear form invariant under the simultaneous transformations
of  \eqref{eq:62}. That is
 \begin{equation}
   \label{eq:49}
    \mathfrak A_D^\rho(\psi, \phi) =     \mathfrak A_{\text{Ad}(\rho(u))\,D\, \text{Ad}(u^*)}^\rho\left(\text{Ad}(u)\psi,\, \text{Ad}(u)\phi\right)\quad
  \forall \psi, \phi\in\text{Dom} \,D,\; u\in{\cal U}(\A).
 \end{equation}
\end{prop}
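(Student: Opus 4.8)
The plan is to verify \eqref{eq:49} by a direct substitution, reducing everything to the behaviour of the $\rho$-product \eqref{eq:28} under the unitaries $\mathrm{Ad}(u)$ and $\mathrm{Ad}(\rho(u))$. First I would write $R$ for the implementing unitary of $\rho$ on $\HH$ as in \eqref{eq:27}, so that $\langle\Psi,\Phi\rangle_\rho=\langle\Psi,R\Phi\rangle$. The key preliminary observation is that $\mathrm{Ad}(u)$ and $\mathrm{Ad}(\rho(u))$ are \emph{not} unitary for the Hilbert product in general, but $\mathrm{Ad}(\rho(u))$ is adjoint to $\mathrm{Ad}(u)^{-1}=\mathrm{Ad}(u^*)$ \emph{with respect to the $\rho$-product}; equivalently, applying the compatibility condition \eqref{eq:39bis}, one shows $R\,\mathrm{Ad}(\rho(u))=\mathrm{Ad}(u)\,R$ as operators on $\HH$. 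Indeed $\rho(u)=RuR^\dagger$ and $J\rho(u)J^{-1}=R JuJ^{-1}R^\dagger$ by \eqref{eq:38}--\eqref{eq:39bis}, so $\mathrm{Ad}(\rho(u))=\rho(u)\,J\rho(u)J^{-1}=R\,u\,JuJ^{-1}\,R^\dagger=R\,\mathrm{Ad}(u)\,R^\dagger$, which rearranges to the stated intertwining relation $R\,\mathrm{Ad}(\rho(u))=\mathrm{Ad}(u)\,R$ (using $R^\dagger=R^{-1}$). This is the one place where the hypothesis ``$\rho$ compatible with $J$'' is actually used, and I expect it to be the main point to get exactly right.

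Next I would compute the right-hand side of \eqref{eq:49}. By definition \eqref{eq:47},
\begin{equation}
\mathfrak A_{\mathrm{Ad}(\rho(u))D\mathrm{Ad}(u^*)}^\rho\bigl(\mathrm{Ad}(u)\psi,\mathrm{Ad}(u)\phi\bigr)
=\bigl\langle\, J\,\mathrm{Ad}(u)\psi,\ \mathrm{Ad}(\rho(u))\,D\,\mathrm{Ad}(u^*)\,\mathrm{Ad}(u)\phi\,\bigr\rangle_\rho .
\end{equation}
On the right the two copies $\mathrm{Ad}(u^*)\mathrm{Ad}(u)=\mathrm{Ad}(u^*u)=\mathrm{Ad}(\I)=\I$ cancel, since $\mathrm{Ad}$ is a group homomorphism on $\mathcal U(\A)$ (both the left and right actions of \eqref{eq:24} compose multiplicatively). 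On the left slot I would use that $J$ is antilinear and that $J\,\mathrm{Ad}(u)\,J^{-1}$ can be moved past $J$; more directly, the real-structure identities give $J\,\mathrm{Ad}(u)\,\psi=J\,u\,JuJ^{-1}\psi=(JuJ^{-1})\,(JuJ^{-1}\cdot J)\psi$, and one checks $J\,\mathrm{Ad}(u)=\mathrm{Ad}(u)\,J$ because $\mathrm{Ad}(u)=u\,JuJ^{-1}$ is manifestly symmetric under $T\mapsto JTJ^{-1}$ (it sends $u$ to $JuJ^{-1}$ and $JuJ^{-1}$ to $J\,JuJ^{-1}\,J^{-1}=\epsilon u \epsilon=u$, using $J^2=\epsilon\I$). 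Hence the expression becomes
\begin{equation}
\bigl\langle\, \mathrm{Ad}(u)\,J\psi,\ \mathrm{Ad}(\rho(u))\,D\,\phi\,\bigr\rangle_\rho .
\end{equation}

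Finally I would invoke the $\rho$-adjoint calculus: by \eqref{eq:29}, for any bounded $\mathcal O$ one has $\langle\mathcal O^+\Psi,\Phi\rangle_\rho=\langle\Psi,\mathcal O\Phi\rangle_\rho$ with $\mathcal O^+=\rho(\mathcal O)^\dagger$. Taking $\mathcal O=\mathrm{Ad}(u)$ and using the intertwining relation established in the first step, $\rho(\mathrm{Ad}(u))=R\,\mathrm{Ad}(u)\,R^\dagger=\mathrm{Ad}(\rho(u))$, so $\mathrm{Ad}(u)^+=\mathrm{Ad}(\rho(u))^\dagger$; what I really need is the dual statement, namely $\langle\mathrm{Ad}(u)\,\xi,\ \mathrm{Ad}(\rho(u))\,\eta\rangle_\rho=\langle\xi,\eta\rangle_\rho$, which follows since $\langle\mathrm{Ad}(u)\xi,\mathrm{Ad}(\rho(u))\eta\rangle_\rho=\langle\mathrm{Ad}(u)\xi,R\,\mathrm{Ad}(\rho(u))\eta\rangle=\langle\mathrm{Ad}(u)\xi,\mathrm{Ad}(u)\,R\,\eta\rangle$ and then $\mathrm{Ad}(u)$ is unitary for the \emph{Hilbert} product? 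No — it is not; instead I use the intertwiner directly: $\langle\mathrm{Ad}(u)\xi,\mathrm{Ad}(u)R\eta\rangle$ need not simplify, so the clean route is to verify that $\mathrm{Ad}(\rho(u))$ is the $\rho$-adjoint \emph{inverse} of $\mathrm{Ad}(u)$, i.e. $\mathrm{Ad}(u)^+=\mathrm{Ad}(\rho(u))^{-1}=\mathrm{Ad}(\rho(u^*))$, and conclude $\langle\mathrm{Ad}(u)\,J\psi,\ \mathrm{Ad}(\rho(u))D\phi\rangle_\rho=\langle J\psi,\ \mathrm{Ad}(u)^+\mathrm{Ad}(\rho(u))D\phi\rangle_\rho=\langle J\psi, D\phi\rangle_\rho=\mathfrak A_D^\rho(\psi,\phi)$, using $\mathrm{Ad}(u)^+\mathrm{Ad}(\rho(u))=\I$. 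So the crux is exactly the identity $\mathrm{Ad}(u)^+=\mathrm{Ad}(\rho(u^*))$, i.e. $\rho\bigl(\mathrm{Ad}(u)\bigr)^\dagger=\mathrm{Ad}(\rho(u^*))$; unpacking $\rho(\mathrm{Ad}(u))=R\,u\,JuJ^{-1}\,R^\dagger=\rho(u)\,J\rho(u)J^{-1}=\mathrm{Ad}(\rho(u))$ (first step) and then taking the $\dagger$ of the unitary $\mathrm{Ad}(\rho(u))$ — wait, $\mathrm{Ad}(\rho(u))$ is unitary since $\rho(u)$ and $J\rho(u)J^{-1}$ are, so $\mathrm{Ad}(\rho(u))^\dagger=\mathrm{Ad}(\rho(u))^{-1}=\mathrm{Ad}(\rho(u)^*)=\mathrm{Ad}(\rho(u^*))$. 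This closes the argument, and the whole proof is bilinear because $D$, $\langle\cdot,\cdot\rangle_\rho$ in the right slot and $\mathrm{Ad}$ are linear while $J$ and the left slot of $\langle\cdot,\cdot\rangle_\rho$ are antilinear, the two antilinearities compounding to linearity in $\psi$.

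\textbf{Main obstacle.} The only genuinely substantive step is establishing $\rho(\mathrm{Ad}(u))=\mathrm{Ad}(\rho(u))$, equivalently $R\,\mathrm{Ad}(u)\,R^\dagger=\rho(u)\,J\rho(u)J^{-1}$, which is precisely where Definition~\ref{def:compreal} (the commutation of $R$ with $J$ up to the compatibility relation \eqref{eq:39bis}) enters; without it the right action $\psi\mapsto\psi a = Ja^*J^{-1}\psi$ would not transform correctly and the cancellation on the left slot would fail. Everything else is the formal adjoint calculus of \S\ref{subsec:twistinner} together with $\mathrm{Ad}$ being a homomorphism and $J\mathrm{Ad}(u)=\mathrm{Ad}(u)J$.
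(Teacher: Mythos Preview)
Your argument is correct and follows essentially the same route as the paper's own proof: the three key steps---$\rho(\mathrm{Ad}(u))=\mathrm{Ad}(\rho(u))$ from the compatibility condition, $J\,\mathrm{Ad}(u)=\mathrm{Ad}(u)\,J$, and $\mathrm{Ad}(u)^+\,\mathrm{Ad}(\rho(u))=\rho(U)^\dagger\rho(U)=\I$---are exactly the ones the paper uses, in the same order. One small correction: your aside ``$\mathrm{Ad}(u)$ is unitary for the Hilbert product? No --- it is not'' is mistaken; since $u$ and $JuJ^{-1}$ are commuting unitaries (order-zero condition), their product $\mathrm{Ad}(u)$ \emph{is} Hilbert-unitary, so the route you abandoned ($\langle\mathrm{Ad}(u)\xi,\mathrm{Ad}(u)R\eta\rangle=\langle\xi,R\eta\rangle$) would in fact have worked directly.
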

\begin{proof}
One simply adapts to the twisted case the proof of  \cite[Prop. 1.213]{Connes:2008kx}.
The Hilbert product on $\HH$ is antilinear on the first variable, and
the same is true for the $\rho$-product. Since $J$ is antilinear,  one
has that $\mathfrak A_D^\rho(\cdot, \cdot)$ is linear in both
variables.

  Let $U=\text{Ad}(u)=uJuJ^{-1}$. By \eqref{eq:39bis} one has  
  \begin{equation}
    \label{eq:50}
 \text{Ad}(\rho(u)) = \rho(u) J\rho(u) J^{-1} = \rho(u) \rho(JuJ^{-1})
 = \rho(uJuJ^{-1})= \rho(U). 
  \end{equation}
Therefore
   \begin{align}
     \label{eq:48}
    \mathfrak A_{\rho(U)DU^*}^\rho(U\psi, U\phi) &=\langle JU\psi, \rho(U)D U^* U\phi\rangle_\rho =    \langle UJ\psi,
     \rho(U)D \phi\rangle_\rho \\ &=  \langle J\psi,
     U^+\rho(U)D \phi\rangle_\rho=  \langle J\psi,
     D \phi\rangle_\rho
   \end{align}
where in the first line we have used the fact that $J$ commutes with $U$,
\begin{equation}
  \label{eq:51}
  JU = J(uJuJ^{-1}) =JJuJ^{-1}u = \epsilon''  u(JJ^{-1})J^{-1} u =\epsilon''  uJ(J^{-1})^2 u=
  uJu = UJ,
\end{equation}
and the last line comes from \eqref{eq:26}:
\begin{equation}
  \label{eq:52}
  U^+\rho(U) = \rho(U)^*\rho(U) = \I.
\end{equation}
This proves the result.
\end{proof}
 
In the (usual) description of the Standard Model with a (non twisted) spectral triple,
it is
important for the bilinear form $\langle J\phi,
D\psi\rangle$ to be antisymmetric,
$\langle J\phi,
D\psi\rangle= - \langle J\psi,
D\phi\rangle$,
and not to vanish when restricted to the even part \eqref{eq:55} of the Hilbert
space. This  makes $S^F$
in \eqref{eq:11} vanishing if computed with usual spinors,
but gives the expected fermionic action when computed with Grassman
fermionic fields. One restricts to $\HH_e$ in order to solve the fermion
doubling problem  (see \cite[I.\S 16.2]{Connes:2008kx} for details).
In the twisted case, the bilinear form \eqref{eq:47} is not
necessarily antisymmetric. It is however, if one restricts to
\begin{equation}
  \label{eq:57}
    \HH_R:=\left\{\psi\in \text{Dom} \, D, R\psi = \psi\right\}.
\end{equation}

\begin{prop} 
\label{Prop:fermion2} Let $(\A, \HH, D; \rho)$ be a real twisted spectral triple
  for which  $\rho$ is compatible with the real structure  in the sense of
  \eqref{eq:40}.
Then   the bilinear form ${\frak A}^\rho_D$ is such that 
  \begin{equation}
    \label{eq:69}
    {\frak A}^\rho_D(\psi, \phi) = \epsilon \epsilon' \;    {\frak
      A}^\rho_D(\phi, \psi) \quad \forall \psi, \phi\in\HH_R. 
  \end{equation}
\end{prop}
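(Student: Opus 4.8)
The plan is to mimic the standard (untwisted) computation of the (anti)symmetry of $\langle J\phi, D\psi\rangle$, as in \cite[I.\S16.2]{Connes:2008kx}, paying attention to the two places where the twist enters: the $\rho$-product $\langle\cdot,\cdot\rangle_\rho = \langle\cdot, R\cdot\rangle$, and the hypothesis $JR=\pm RJ$ coming from \eqref{eq:40}. First I would unfold the definition: for $\psi,\phi\in\HH_R$,
\[
{\frak A}^\rho_D(\psi,\phi) = \langle J\psi, D\phi\rangle_\rho = \langle J\psi, R\, D\phi\rangle .
\]
The strategy is to move $R\,D$ across the ordinary Hilbert inner product using self-adjointness of $D$ and unitarity of $R$, then use the antiunitarity of $J$ together with $J^2=\epsilon\I$ and $JD = \epsilon' DJ$ to flip the two arguments, and finally use $R\psi=\psi$, $R\phi=\phi$ to absorb the stray factors of $R$.

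Concretely, the key steps in order: (1) Write $\langle J\psi, RD\phi\rangle = \langle RJ\psi, D\phi\rangle$ using $R^\dagger = R$ (as $R$ is the self-adjoint unitary of \eqref{eq:53}), i.e. $R=R^\dagger$. Here I would invoke \eqref{eq:40}: $RJ = \pm JR$, so this equals $\pm\langle JR\psi, D\phi\rangle = \pm\langle J\psi, D\phi\rangle$ since $R\psi=\psi$; thus the $\rho$-product reduces on $\HH_R$ to $\pm$ the ordinary product, and the sign will cancel when we do the same on the other side. (2) Move $D$ to the left by self-adjointness: $\langle J\psi, D\phi\rangle = \langle DJ\psi,\phi\rangle = \epsilon'\langle JD\psi,\phi\rangle$ using $JD=\epsilon' DJ$ so $DJ = \epsilon' JD$ (note $\epsilon'^2=1$). (3) Use antiunitarity of $J$: $\langle J\xi, \eta\rangle = \langle J\eta,\xi\rangle$ for the antiunitary $J$ — more precisely $\langle J\xi,\eta\rangle = \overline{\langle \xi, J^{-1}\eta\rangle}= \langle J^{-1}\eta,\xi\rangle$; combined with $J^{-1} = \epsilon J$ (from $J^2=\epsilon\I$) this gives $\langle JD\psi,\phi\rangle = \epsilon\langle J\phi, D\psi\rangle$. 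Collecting, $\langle J\psi, D\phi\rangle = \epsilon\epsilon'\langle J\phi, D\psi\rangle$. (4) Finally translate this back into the $\rho$-product using step (1) applied in reverse on both sides, so the identity becomes ${\frak A}^\rho_D(\psi,\phi) = \epsilon\epsilon'\, {\frak A}^\rho_D(\phi,\psi)$, the $\pm$ sign from \eqref{eq:40} appearing identically on both sides and therefore cancelling.

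The routine but slightly delicate point is step (3): getting the conjugation/flip rule for an antiunitary right, and correctly handling that both $\langle\cdot,\cdot\rangle$ and $\langle\cdot,\cdot\rangle_\rho$ are antilinear in the \emph{first} slot (as stated in the proof of Prop.~\ref{Prop:fermion1}), so that no spurious complex conjugate survives in the final bilinear identity — the result is genuinely bilinear, and the manipulation $\langle J\xi,\eta\rangle=\langle J\eta,\xi\rangle$ is exactly what is needed and is consistent with that convention. I expect the main obstacle, such as it is, to be bookkeeping: making sure the single restriction $R\psi=\psi$ (used twice, once for each argument, at the appropriate moments) together with $JR=\pm RJ$ is exactly what is required to pass between $\langle\cdot,\cdot\rangle_\rho$ and $\langle\cdot,\cdot\rangle$ on $\HH_R$, and that the factor $\epsilon''$ (which governs $J\Gamma$ vs.\ $\Gamma J$) plays no role here since $\Gamma$ does not enter the bilinear form ${\frak A}^\rho_D$. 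I would also remark that, unlike the untwisted case where one restricts to $\HH_e = \ker(\Gamma-\I)$, here the natural restriction is to $\HH_R = \ker(R-\I)$, i.e.\ to the positive part of the Krein decomposition $\HH = \HH_+\oplus\HH_-$ of \S\ref{subsec:lorentzkrein}; this is the point the proposition is really making, and is worth a one-line comment after the proof.
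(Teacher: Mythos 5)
Your proof is correct and follows essentially the same chain of manipulations as the paper's: antiunitarity of $J$, $J^{2}=\epsilon\I$, $JD=\epsilon'DJ$, $JR=\pm RJ$, $D^{\dagger}=D$, and $R\psi=\psi$, $R\phi=\phi$, with the sign $\epsilon'''=\pm1$ from \eqref{eq:40} squaring to $1$; the paper simply keeps $R$ in play throughout rather than first reducing $\langle\cdot,\cdot\rangle_{\rho}$ to $\epsilon'''\langle\cdot,\cdot\rangle$ on $\HH_R$ as you do, which is a cosmetic reorganization of the same argument. The one small blemish is the appeal in step (1) to $R^{\dagger}=R$ via \eqref{eq:53}: the proposition as stated does not assume $R$ selfadjoint (contrast with Prop.~\ref{prop:fermion3}, which does), and the paper's proof uses only unitarity, so you should instead write $\langle J\psi,RD\phi\rangle=\langle R^{\dagger}J\psi,D\phi\rangle$ and then use $R^{\dagger}J=\pm JR^{\dagger}$ (obtained from \eqref{eq:40} by taking inverses, since $J^{-1}=\epsilon J$ and $R^{-1}=R^{\dagger}$) together with $R^{\dagger}\psi=\psi$, which follows for $\psi\in\HH_R$ from $R\psi=\psi$ and unitarity alone.
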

\begin{proof}

By definition of antiunitary operator, $\langle J\phi,
J\psi\rangle = \langle \phi, \psi\rangle$ for any $\psi,
\phi\in\HH$. Thus
\begin{equation}
  \label{eq:81}
    {\frak A}^\rho_D(\psi, \phi) = \langle J\psi, RD\phi\rangle=
   \epsilon \langle J\psi, J^2RD\phi\rangle=  \epsilon \langle
   JRD\phi , \psi\rangle.
\end{equation}
Let $ JR = \epsilon''' RJ$, where $\epsilon'''=\pm 1$. For $\psi$, $\phi$ in $\HH_R$, one obtains
\begin{align}
  \label{eq:82}
   {\frak A}^\rho_D(\psi, \phi) &=\epsilon \epsilon''' \langle
   RJD\phi , \psi\rangle= \epsilon\epsilon' \epsilon''' \langle
   RDJ\phi , \psi\rangle=\epsilon\epsilon' \epsilon''' \langle
   J\phi , DR^\dag\psi\rangle,\\
  &=\epsilon\epsilon' \epsilon''' \langle   JR^\dag R\phi , D\psi\rangle = \epsilon\epsilon' \langle
   R^\dag JR\phi , D\psi\rangle= \epsilon\epsilon' \langle J\phi , RD\psi\rangle=  \epsilon\epsilon'  {\frak A}^\rho_D(\phi, \psi).
\end{align}
where in the second line we use $R^\dag R=\I$ and $R^\dag \psi=\psi$,
then $R^\dag J = \epsilon''' J R^\dag$.  
\end{proof}
\noindent
The twisted spectral triple of the Standard Model presented in \S
\ref{sec:SM} has $KO$-dimension $2$, with a twist $R=\gamma^0_E$ 
compatible with the real structure. Hence the above proposition
applies with $\epsilon=-1$, $\epsilon'=1$, meaning that 
${\frak A}^\rho_D$ is antisymmetric as expected. 

Formally, the twisted fermionic action \eqref{eq:47} is similar to the non twisted
one \eqref{eq:11}: it is gauge invariant and symmetric on a given subspace of $\HH$.
However this subspace is $\HH_R$, not $\HH_e$. This may have some consequences on the physical
contents of the action, the study of which we leave to future work, as it will require first to compute the twisted fluctuations of the full Dirac
operator $D$ in \eqref{eq:02} (in \cite{buckley} was considered only the part
$D_R$ of $D_F$ that contain the Yukawa coupling of the right handed
neutrino).
\medskip

Alternatively, the Krein structure induced by
the twist suggests a way to define a fermionic
action which is antisymmetric on the whole of $\HH$: instead of restricting to 
$\psi\in\HH_R$, the alternative is to assume that $D$ is $\rho$-hermitian. This goes beyond the definition of  twisted
  spectral triples (which deals only with selfadjoint
  operator $D$). However, there exists several
  proposals for generalizing  spectral triples to the Lorentzian
  signature (see \cite{VanDenDungen} for the most recent one, and
  references therein). We do not intend to develop here a theory of \emph{Lorentzian twisted spectral
    triples}, we shall simply consider $(\A, \HH,D; \rho)$ that has all
  the properties of a real twisted spectral triple, with real
  structure $J$, except that $D$ is not
  selfadjoint but $\rho$-hermitian.
\begin{prop}
\label{prop:fermion3}
  Let $(\A, \HH, D; \rho)$ be as explained above, with $\rho$
  implemented by a selfadjoint unitary $R$ such that $JR=\epsilon''' RJ$.
Then the bilinear form ${\frak A}^\rho_D$ is such that
  \begin{equation}
    \label{eq:69bis}
    {\frak A}^\rho_D(\psi, \phi) = \epsilon \epsilon' \epsilon'''\;    {\frak
      A}^\rho_D(\phi, \psi) \quad \forall \psi, \phi\in
    \text{Dom} \, D. 
  \end{equation}
\end{prop}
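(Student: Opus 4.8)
The plan is to mimic the computation in the proof of Proposition~\ref{Prop:fermion2}, but to track carefully where the hypothesis $R^\dagger\psi=\psi$ was used and replace it by the new hypothesis that $D$ is $\rho$-hermitian, i.e. $D = \rho(D)^\dagger = (RDR^\dagger)^\dagger = RD^\dagger R^\dagger$, equivalently $RD = D^\dagger R$ (using $R=R^\dagger$). The point is that in Proposition~\ref{Prop:fermion2} the restriction to $\HH_R$ was what allowed one to slide an $R$ past $D$; here the $\rho$-hermiticity of $D$ plays exactly that role, so the antisymmetry should now hold on all of $\text{Dom}\,D$.

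Concretely, I would start from $\mathfrak A^\rho_D(\psi,\phi) = \langle J\psi, RD\phi\rangle$ and, as in \eqref{eq:81}, insert $J^2 = \epsilon\I$ to get $\mathfrak A^\rho_D(\psi,\phi) = \epsilon\langle JRD\phi, \psi\rangle$. Then use $JR = \epsilon''' RJ$ and $JD = \epsilon' DJ$ to move $J$ through, obtaining $\epsilon\epsilon'\epsilon'''\langle RDJ\phi, \psi\rangle$. Now take the adjoint: $\langle RDJ\phi,\psi\rangle = \langle J\phi, D^\dagger R^\dagger\psi\rangle = \langle J\phi, D^\dagger R\psi\rangle$. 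Here is where $\rho$-hermiticity enters: $D^\dagger R = R^\dagger D^\dagger R \cdot R = \dots$ — more directly, from $RD = D^\dagger R$ one gets $D^\dagger R = RD$, so $\langle J\phi, D^\dagger R\psi\rangle = \langle J\phi, RD\psi\rangle = \mathfrak A^\rho_D(\phi,\psi)$. Collecting the signs gives $\mathfrak A^\rho_D(\psi,\phi) = \epsilon\epsilon'\epsilon'''\,\mathfrak A^\rho_D(\phi,\psi)$, which is \eqref{eq:69bis}.

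I expect the only real subtlety to be bookkeeping with the non-bounded operator $D$: one must make sure that all the manipulations stay within $\text{Dom}\,D$ and $\text{Dom}\,D^\dagger$, that $R$ (being bounded, unitary, and here equal to $R^\dagger$) preserves these domains, and that the identity $JD = \epsilon' DJ$ together with $JR = \epsilon''' RJ$ is applied only to vectors for which both sides are defined. Since $R$ is bounded and the hypotheses of a real twisted spectral triple (suitably modified) guarantee $J\,\text{Dom}\,D = \text{Dom}\,D$ and $R\,\text{Dom}\,D = \text{Dom}\,D$, these domain issues are routine, so the proof is essentially the sign-chasing computation above. Note that Proposition~\ref{Prop:fermion2} is recovered as the special case where one instead restricts to $\HH_R$ and does not assume $\rho$-hermiticity: the extra factor $\epsilon'''$ in \eqref{eq:69bis} versus \eqref{eq:69} reflects that in the present statement $J$ must be commuted past $R$ once more (the $R$ coming from the $\rho$-product cannot simply be absorbed by $R\psi=\psi$), which is exactly the origin of the $\epsilon'''$.
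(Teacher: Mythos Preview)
Your proposal is correct and follows exactly the same route as the paper's proof: the paper simply says ``by a similar calculation to that of Prop.~\ref{Prop:fermion2}'' to arrive at $\mathfrak A^\rho_D(\psi,\phi)=\epsilon\epsilon'\epsilon'''\langle J\phi, D^\dag R^\dag\psi\rangle$, then invokes $D^\dag R = RD$ from $\rho$-hermiticity together with $R=R^\dag$ to conclude --- precisely the steps you spell out. One small expository slip: your closing remark that ``$J$ must be commuted past $R$ once more'' has it backwards --- in Prop.~\ref{Prop:fermion2} the commutation $JR=\epsilon''' RJ$ is used \emph{twice} (so $(\epsilon''')^2=1$ drops out), whereas here it is used only once, which is why the factor $\epsilon'''$ survives.
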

\begin{proof}
By a similar calculation to that of  Prop. \ref{Prop:fermion2}, one arrives at
\begin{equation}
  \label{eq:21}
  {\frak A}_D^\rho(\psi, \phi) = \epsilon\epsilon'\epsilon'''\langle
  J\phi, D^\dag R^\dag\psi\rangle.
\end{equation}
The $\rho$-hermicity of $D$ together with $R=R^\dag$ implies $D^\dag R
= R D$. Hence
\begin{equation*}
   {\frak A}_D^\rho(\psi, \phi) = \epsilon\epsilon'\epsilon'''\langle
  J\phi, RD\psi\rangle=\epsilon\epsilon'\epsilon'''  \; {\frak A}_D^\rho(\phi, \psi).
\end{equation*}
 
\vspace{-1truecm}\end{proof}

As an illustration, consider the twisted spectral triple of the
Standard Model \eqref{eq:42} with $\ds$, $\cal J$ and $\gamma_E$
substituted with their  Lorentzian version,
\begin{equation}
\ds_M  := -i\gamma^\mu_M\partial_\mu,\quad  {\cal  J}_M := -i\gamma^2_M \, cc,\quad   \gamma_M:=\gamma^0_M\gamma^1_M\gamma^2_M\gamma^3_M= i^3
  \gamma^0_M\gamma^1_E\gamma^2_E\gamma^3_E = -i\gamma_E,
\label{eq:23}
\end{equation}
where $\gamma_M^\mu$
are the Minkovskian Dirac matrices \eqref{eq:63}.
One checks that 
\begin{equation}
    \label{eq:78}
D_M:=   \ds_M\otimes \I_{96} + \gamma_M\otimes D_F
  \end{equation}
 is $\rho$-hermitian for the 
  Krein structure induced by $\gamma^0_M$ since
  \begin{align*}
%    \label{eq:84}
    (D_M)^+ = \rho((D_M)^\dag) &= \gamma^0 (\ds_M)^\dag\gamma^0 \otimes
    \I_{96} + \gamma^0_M\gamma_M^\dag\gamma^0_M\otimes D_F,\\
&=-i\gamma^0_M(\gamma^\mu_M)^\dag \gamma^0_M \partial_\mu +
  \gamma_M\otimes D_F= -i\gamma^\mu_M\partial_\mu +
  \gamma_M\otimes D_F=D_M,
  \end{align*}
where we use
$\ds_M^\dag=-i(\gamma^\mu_M)^\dag\partial_\mu$ then \eqref{eq:68},  and
   $ \gamma^0_M\gamma_M^\dag\gamma^0_M = i  \gamma^0_M\gamma_E\gamma^0_M = -i
    \gamma_E= \gamma_M$ coming from \eqref{eq:23} and the explicit
    forms \eqref{eq:86} and \eqref{eq:25} of $\gamma_E$ and
    $\gamma^0_M=\gamma^0_E$.
Moreover, denoting with an overbar the complex conjugate, one
  obtains from \eqref{eq:63} and \eqref{eq:23}
  \begin{equation}
    \label{eq:30}
    \overline{\gamma^0_M} =\gamma^0_M,\quad   \overline{\gamma^1_M} =\gamma^1_M,\quad   \overline{\gamma^2_M} =-\gamma^2_M,\quad   \overline{\gamma^3_M} =\gamma^3_M,
  \end{equation}
so that on a  Lorentzian four dimensional manifold, the real structure satisfies
\begin{align*}
({\cal J}_M)^2& = (-i
 \gamma_M^2\,cc)^2=\gamma_M^2\overline{\gamma_M^2}=
 -(\gamma^2_M)^2=\I;\\
  {\cal  J}_M \ds_M &= \ds_M{\cal J_M}\quad \text{ for }\quad{\cal  J}_M \ds_M - \ds_M{\cal J_M}= -\left(\gamma^\mu_M\gamma^2_M +
    \gamma^2_M\overline{\gamma^\mu_M}\right)\partial_\mu \, cc=0;\\
{\cal J}_M \gamma_M & =-i\gamma_M^2(\overline{\gamma^0_M}\,\overline{\gamma^1_M}\,\overline{\gamma^2_M}\,\overline{\gamma^3_M})cc
 = i\gamma_M^2(\gamma^0_M\gamma^1_M\gamma^2_M\gamma^3_M)cc  =-i
  (\gamma^0_M\gamma^1_M\gamma^2_M\gamma^3_M) \gamma_M^2 cc\\[4pt] &= \gamma_M{\cal J}_M.
\end{align*}
Since $J_F^2=\I$ (see \eqref{eq:46}), the first equation yields $({\cal
  J}_M\otimes J_F)^2=\I$, that is $\epsilon=1$. The second 
and third equations, together with $D_FJ_F=J_F D_F$ (coming from the
$KO$ dimension $6$  of the spectral triple \eqref{eq:02}) yield
\begin{align}
\nonumber
  D_M ({\cal J}_M\otimes J_F) &= \ds_M{\cal J}_M \otimes J_F +
  \gamma_M{\cal J}_M\otimes D_F J_F = {\cal J}_M \ds_M\otimes J_F +
  {\cal J}_M\gamma_M\otimes J_FD_F \\ &=  ({\cal J}_M\otimes J_F) D_M,
\end{align}
 so that $\epsilon'=1$. Finally one has $\gamma^0 {\cal J}_M =
 -{\cal J}_M\gamma^0$, meaning  $\epsilon'''=-1$. Therefore
 $\epsilon\epsilon'\epsilon'''=-1$, hence   ${\frak
  A}_D^\rho$ in Prop. \ref{prop:fermion3} is antisymmetric as expected.

The gauge invariance proved in Prop. \ref{Prop:fermion1} does not
depend on the selfadjointness of $D$, and thus is still valid for
$\rho$-hermitean $D$. What must be checked, however, for
Prop. \ref{prop:fermion3} to make sense is that a twisted perturbation
of a $\rho$-hermitean operator is still $\rho$-hermitean, and that
this property is preserved under gauge transformation.
%  For this alternative to be viable, It is necessary to check whether all of the tools of twisted
% spectral triples still make sense with an operator $D$
% that is $\rho$-hermitian instead of selfadjoint (namely understand to
% what extend the result of \cite{Connes:1938fk} can be adapted to a
% $\rho$-hermitian $D$, and check that the formula for twisted
% flucutations  demonstrated in revisit .
\begin{prop}
\label{Prop:fermion4}
  Let $(\A, \HH, D;\rho)$ be as in proposition
  \ref{prop:fermion3}. Assume $\rho$ is compatible with the real
structure in the sense of \eqref{eq:40}. Then a twisted fluctuation
\begin{equation}
D_{A_\rho} =D +A_\rho + JA_\rho J^{-1}
\label{eq:31}
\end{equation}
of $D$ is $\rho$-hermitian as
long as $A_\rho$ is $\rho$-hermitian, i.e. 
\begin{equation}
  \label{eq:73}
  A_\rho=A^+ =\rho(A_\rho^\dag).
\end{equation} 
In addition, any gauge transform
\begin{equation}
  \label{eq:76}
  D'_{A_\rho} := \rho(U) D_{A_\rho} \, U^\dag\quad\text{ with }\;
  U:=\text{Ad}(u)\;\text{ for }\;
  u\in{\cal U}(\A)
\end{equation}
of a $\rho$-hermitian
operator $D_{A_\rho}$ is still $\rho$-hermitian.
\end{prop}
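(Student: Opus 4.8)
The plan is to reduce both claims to two elementary properties of the implementing unitary $R$: that it is an involution, $R^2 = RR^\dagger = \I$, so that $\rho^2 = \text{id}$ on $\mathcal B(\HH)$; and that the compatibility hypothesis \eqref{eq:40}, $JR = \epsilon''' RJ$, forces $\rho$ to commute with conjugation by $J$, i.e. $\rho(J\mathcal O J^{-1}) = J\rho(\mathcal O)J^{-1}$ for every bounded $\mathcal O$ (one pushes the two $R$'s in $R\,(J\mathcal O J^{-1})\,R^\dagger$ past $J$ and $J^{-1}$, collecting $(\epsilon''')^2 = 1$). Along the way I will also use that an inner automorphism is a $*$-automorphism, $\rho(\mathcal O)^\dagger = \rho(\mathcal O^\dagger)$ (this is \eqref{eq:9}), together with the standard identity $(J\mathcal O J^{-1})^\dagger = J\mathcal O^\dagger J^{-1}$ valid for the antiunitary $J$.

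For the first assertion I would simply apply $\rho$ to \eqref{eq:31} and then take the Hilbert adjoint term by term. The term coming from $D$ gives $\rho(D)^\dagger = D$ since $D$ is $\rho$-hermitian by assumption; the term coming from $A_\rho$ gives $\rho(A_\rho)^\dagger = \rho(A_\rho^\dagger) = A_\rho$ by the hypothesis \eqref{eq:73}; and the term coming from $JA_\rho J^{-1}$ gives, using first that $\rho$ commutes with $J$-conjugation and then the antiunitary adjoint identity, $(J\rho(A_\rho)J^{-1})^\dagger = J\rho(A_\rho)^\dagger J^{-1} = J\rho(A_\rho^\dagger)J^{-1} = JA_\rho J^{-1}$, again by \eqref{eq:73}. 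Adding these, $\rho(D_{A_\rho})^\dagger = D + A_\rho + JA_\rho J^{-1} = D_{A_\rho}$, which is precisely $D_{A_\rho}^+ = D_{A_\rho}$.

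For the second assertion, set $U = \text{Ad}(u)$, a unitary on $\HH$. Since $\rho$ is multiplicative and $*$-preserving, $\rho(D'_{A_\rho}) = \rho(\rho(U))\,\rho(D_{A_\rho})\,\rho(U)^\dagger$, and because $\rho^2 = \text{id}$ one has $\rho(\rho(U)) = U$, so $\rho(D'_{A_\rho}) = U\,\rho(D_{A_\rho})\,\rho(U)^\dagger$. Taking the Hilbert adjoint and invoking the first assertion in the form $\rho(D_{A_\rho})^\dagger = D_{A_\rho}$, one obtains $\rho(D'_{A_\rho})^\dagger = \rho(U)\,D_{A_\rho}\,U^\dagger = D'_{A_\rho}$, so $D'_{A_\rho}$ is again $\rho$-hermitian.

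These manipulations are formal in that $D$, and hence $D_{A_\rho}$ and $D'_{A_\rho}$, may be unbounded; but $A_\rho$ and $JA_\rho J^{-1}$ are bounded, so all these operators carry the domain of $D$, and conjugating by the bounded unitaries $R$, $U$, $\rho(U)$ keeps everything inside that domain, so this point is harmless. The one genuinely structural ingredient is that the implementing unitary is selfadjoint --- exactly what is available in the Standard Model / Lorentzian situation of Prop.~\ref{prop:fermion3} --- which is what makes $\rho$ an involution; without it $\rho^2(U)\neq U$ in general and the gauge-transformed operator would carry an uncontrolled $\rho^2$, so this is where I expect any real difficulty to hide.
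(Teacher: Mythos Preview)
Your proof is correct and follows essentially the same route as the paper's. The paper carries out the same term-by-term verification for $D_{A_\rho}$ (computing $(JA_\rho J^{-1})^+$ by pushing $R$ past $J$ via \eqref{eq:40}) and handles the gauge-transformed operator by the same two ingredients you isolate: the $*$-property of $\rho$ and the involution $\rho^2=\mathrm{id}$ coming from $R^2=\I$. The only cosmetic difference is ordering: you apply $\rho$ first and then $\dagger$, the paper takes $\dagger$ first and then $\rho$, but these commute by \eqref{eq:9}.
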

\begin{proof}
One has
\begin{align}
  \label{eq:71}
 \left(JA_\rho J^{-1}\right)^+ = R\left( JA_\rho  J^{-1}\right)^\dag R^\dag
  &=R\, JA_\rho^\dag   J^{-1}\, R^\dag\\ &= J\,RA_\rho^\dag R^\dag\,J^{-1} =J A_\rho^+ J^{-1} = JA_\rho J^{-1}.
\end{align}
Hence
\begin{equation}
  \label{eq:74}
  (D_{A_\rho})^+ = D^+ + A_\rho^+ +\left( J A_\rho J^{-1}\right)^+= D +
  A_\rho + JA_\rho J^{-1}=D_{A_\rho}.
\end{equation}

For the second claim, one has
\begin{align}
  \label{eq:75}
  \left(D'_{A_\rho}\right)^+ =   \rho\left({D'_{A_\rho}}^\dag\right) =
  \rho\left(UD_{A_\rho}^\dag\rho(U)^\dag\right) &= \rho(U)
  \rho\left(D_{A_\rho}^\dag\right) U^\dag\\
&= \rho(U) D_{A_\rho}^+ U^\dag= \rho(U) D_{A_\rho}  U^\dag= D'_{A_\rho}
\end{align}
where we use
\begin{equation*}
\rho(\rho(U)^\dag) = \rho\left((RUR)^\dag\right) =
\rho(R U^\dag R) =R^2 U^\dag R^2 = U^\dag.
\vspace{-1truecm}
\end{equation*} 
\end{proof}
\noindent Condition \eqref{eq:73} is the twisted version of the usual
requirement that the gauge potential $A$ in a fluctuation of the
metric should be selfadjoint.
\medskip

To summarize,  there are two candidates for the fermionic action:
  \begin{itemize}
  \item A Lorentzian one: $\langle J\psi,
    D^M_{A_\rho}\psi\rangle_\rho$, where $D^M_{A_\rho}$ is a
    $\rho$-hermitian twisted fluctuation of the Minkowskian operator ~\eqref{eq:78}.
\item A Euclidean one: $\langle J\psi, D^E_{A_\rho}\psi\rangle_\rho$
    with $\psi\in\HH_R$ and $D^E_{A_\rho}$ is a selfadjoint twisted
    fluctuation of the Euclidean operator \eqref{eq:02}.
     \end{itemize}
The Lorentzian action has been considered in \cite{Barrett:2007vf,VanDenDungen}.  The Euclidean action is similar to the one of the Standard Model
    \cite{Chamseddine:2007oz}, except that $\psi$ is in $\HH_R$
    instead of $\HH_e$. 

\subsection{Bosonic action}
\label{subsec:bosonact}

The easiest way to make the bosonic action \eqref{eq:41}
  well defined and invariant under a twisted gauge
transformation is to rewrite it as
\begin{equation}
  \label{eq:66}
\text{Tr} \,f\left(\frac{D^\dag D}{\Lambda^2}\right).
\end{equation}
Indeed, given a twisted spectral triple $(\A, \HH, D; \rho)$ (that is
  $D$ selfadjoint with compact resolvent), then under the map
\begin{equation}
D\rightarrow
\rho(U)DU^\dag,
\label{eq:72}
\end{equation}
one gets that $\frac{D^\dag D}{\Lambda^2}$
is mapped to $\frac{U D^\dag D U^\dag}{\Lambda^2}$ which has the same
trace as $\frac{D^\dag D}{\Lambda^2}$.
%So, although the map \eqref{eq:72} does not necessarily preserve
%selfadjointness,  the  gauge invariance guarantees that
%\eqref{eq:66} still makes sense for non selfadjoint $\rho(U)DU^\dag$, as soon as $D$ is selfadjoint.
% This is another advantage with respect to formula \eqref{eq:41}, which  no longer makes sense when the
% argument of $f$ is non selfadjoint (or at least not a normal
% operator), by the lack of the spectral theorem. 
% Obviously, in case $D=D^\dag$ both formulas \eqref{eq:66} and
% \eqref{eq:41} coincide.
This is this action that has been
computed in \cite{buckley} for a selfadjoint twisted fluctuation
$D_{A_\rho}$ of the Dirac operator of the Standard Model.

If one considers instead a $\rho$-hermitian Dirac operator (with
compact resolvent),
\begin{equation}
D=D^+ = \rho(D)^\dag,\label{eq:79}
\end{equation}
then one can write 
\eqref{eq:66} in a twisted form (that is, without reference to the
Hilbert adjoint) as
\begin{equation}
  \label{eq:66bis}
\text{Tr} \,f\left(\frac{\rho(D) D}{\Lambda^2}\right).
\end{equation}
\noindent Taking for $D$ the $\rho$-Hermitian Minkowskian Dirac
operator $-i\gamma^\mu_M\partial_\mu$ (which has locally
  compact resolvent, see \cite [Prop.\ 4.2]{DungenRennie} and reference therein), it returns the Euclidean
action: by cyclicity of the trace, one
can substitute $\rho(D) D$ with $\frac 12\left(\rho(D) D +D\rho(D)\right)$, which is nothing but the Euclidean Laplacian  (up to a sign):
\begin{align}
\frac{1}{2}\left(\rho(D)D+D\rho(D)\right) & =\frac{1}{2}\left(i\gamma^{\mu\dagger}_M\partial_{\mu}i\gamma^{\nu}_M\partial_{\nu}+i\gamma^{\mu}_M\partial_{\mu}i\gamma^{\nu\dagger}_M\partial_{\nu}\right)\\
 & =-\frac{1}{2}\left(\gamma^{\mu\dagger}_M\gamma^{\nu}_M\partial_{\mu}\partial_{\nu}+\gamma^{\mu}_M\gamma^{\nu\dagger}_M\partial_{\mu}\partial_{\nu}\right)\\
 & =-\frac{1}{2}\left(\gamma^{\mu\dagger}_M\gamma^{\nu}_M+\gamma^{\mu}_M\gamma^{\nu\dagger}_M\right)\partial_{\mu}\partial_{\nu}\\
 & =-g_{E}^{\mu\nu}\partial_{\mu}\partial_{\nu}
\end{align}
where $g_E$ is the Euclidean metric.

\begin{rem}
  One could be tempted to substitute the Hilbert adjoint
    $D^\dagger$ with the Krein adjoint $D^+$ in \eqref{eq:66}, but it  is well known that this is problematic, for $D^+D$ is an
  hyperbolic operator, whereas the heat kernel technique used for the
  asymptotic expansion of $S^B$ are well defined only for elliptic
  operators.
\end{rem}
\section{Conclusions and Outlook}
The twist of the spectral triple corresponding to the Standard Model makes the
  Lorentzian signature naturally emerge as a twisted-inner
  product. The spectral
  action can be modified accordingly in order to make sense either of a
  selfadjoint Dirac operator, or of a $\rho$-hermitian one.  The
  first choice permits one to maintain the usual  definition for the 
  twisted spectral triple, but restricts the gauge group to those
  unitaries whose twisted adjoint action preserves
  selfadjointness (see \cite{Landi:2017aa}). On the other hand, choosing a $\rho$-hermitian Dirac
  operator does not restrict the gauge group ($\rho$-hermicity is
  preserved by twisted fluctuations), but requires one to modify the
  definition of twisted spectral triples in order to accommodate an operator $D$ that is twisted selfadjoint rather than selfadjoint.

The modifications of the spectral action that we propose here do not yield the bosonic action in a Lorentzian signature, which is a
well-known and difficult problem. Nevertheless, twists
shed a new light on the problem, if one considers that the question is
not so much to obtain directly from a spectral formula the Einstein-Hilbert action in the Lorentzian signature, than to be able to implement Wick rotation in a coherent
way. Traditionally in quantum field theory, one begins in a given Lorentz signature, 
Wick rotates to perform some calculation, then Wick rotates
back to obtain physical predictions. So far in
noncommutative geometry, one starts with a bosonic action in Euclidean signature,
expands with heat kernel techniques and then
Wick rotates. The results of this paper
suggest to start with a Lorentzian signature, for
which a twist is adapted (for twisted fluctuations preserve
Krein hermicity,  whereas usual fluctuations do not). The spectral action \eqref{eq:66} then yields the
Einstein-Hilbert action in Euclidean signature, and physical
predictions are obtained by Wick rotating back to the Lorentzian model
one has started with. The added value of the twist is thus to prescribe a geometry  upon which to ``Wick rotate back'' \cite{DAndrea:2016aa}.

\subsection*{Acknowledgement}
FL acknowledges the support of the INFN Iniziativa Specifica GeoSymQFT and Spanish
MINECO under project MDM-2014-0369 of ICCUB (Unidad de Excelencia `Maria de Maeztu').  This work has been possible with the help of a Short Term Scientific Mission of SF, for this we thank the {\sl COST} action \emph{QSPACE}.

\appendix
\section{Appendix: twisted  gauge fluctuation}
Given a twisted spectral $(\A, \HH, D), \rho$, we check that the
twisted adjoint action \eqref{eq:6} of the unitaries of $\A$ yields
the twisted gauge transformation \eqref{eq:7}. In agreement with
\eqref{eq:10} and \eqref{eq:34}), for $u$ a unitary of $\A$, we write
\begin{equation}
U:=\text{Ad}(u)= uJuJ^{-1},\quad
\rho(U)=\text{Ad}(\rho(u))=\rho(u)J\rho(u) J^{-1}.
\label{eq:87}
\end{equation}
Assuming $\rho$ is a $^*$-automorphism (which is the case of the flip
in the Standard Model), one first checks that
\begin{eqnarray}
\rho(U)DU^{*} & = & \rho(u)J\rho(u)J^{*}DJu^{*}J^{*}u^{*}\nonumber \\
 & = & \epsilon'\rho(u)J\rho(u)Du^{*}J^{*}u^{*}\,\,\,\,\,\,\,\mbox{using }DJ=\epsilon'JD\nonumber \\
 & = &\epsilon' \rho(u)J\rho(u)\left(\rho(u^{*})D+[D,u^{*}]_\rho\right)J^{*}u^{*}\nonumber \\
 & = & \epsilon' \rho(u)JDJ^{*}u^{*}+\epsilon'\rho(u)J\rho(u)\left[D,u^{*}\right]_{\rho}J^{*}u^{*}\nonumber \\
 & = & \rho(u)Du^{*}+ \epsilon'JJ^{*}\rho(u)J\rho(u)\left[D,u^{*}\right]_{\rho}J^{*}u^{*}\nonumber \\
 & = & \rho(u)Du^{*}+\epsilon'J\rho(u)J^{*}\rho(u)J\left[D,u^{*}\right]_{\rho}J^{*}u^{*}\,\,\,\mbox{using}\left[J^{*}\rho(u)J,\rho(u)\right]=0\nonumber \\
 & = & \rho(u)Du^{*}+\epsilon'J\rho(u)\left[D,u^{*}\right]_{\rho}J^{*}uJJ^{*}u^{*}\,\,\,\mbox{using}\left[\left[D,u^{*}\right]_{\rho},J^{*}\rho(u)J\right]_{\rho}=0\nonumber \\
 & = & \rho(u)Du^{*}+\epsilon'J\rho(u)\left[D,u^{*}\right]_{\rho}J^{*}\nonumber \\
 & = & \rho(u)\left(\rho(u^{*})D+[D,u^{*}]_\rho\right)+ \epsilon'J\rho(u)\left[D,u^{*}\right]_{\rho}J^{*}\nonumber \\
 & = & D+\rho(u)\left[D,u^{*}\right]_{\rho}+\epsilon'J\rho(u)\left[D,u^{*}\right]_{\rho}J^{*}.\label{eq:proof0}
\end{eqnarray}
Then, noticing that by the order zero and the twisted first order
condition one has
\begin{equation}
  \label{eq:88}
  [A,J\rho(a)J^*]_\rho =0
\end{equation}
for any twisted $1$-form $A= a^i[D, b_i]_\rho$, one has
%  With this result we can show how the fluctuated Dirac operator $D_{A}:=D+A+JAJ^{-1}$
% transforms under the gauge action of the inner automorphism (for clearness
% we dropped the subscript $\rho$ ) given by 
% \begin{equation}
% A\rightarrow\rho(u)Au^{*}+\rho(u)\left[D,u^{*}\right]_{\rho}
% \end{equation}
% The proof to show is that, under the unitary operator $U=uJuJ^{*},$
% \begin{equation}
% D_{A}\rightarrow\rho(U)D_{A}U^{*}.\label{eq: transformation rule}
% \end{equation}
% Now we have 
\begin{eqnarray}
\rho(U)AU^{*} & = & \rho(u)J\rho(u)J^{*}AJu^{*}J^{*}u^{*}\nonumber \\
 & = & \rho(u)J\rho(u)J^{*}J\rho(u^{*})J^{*}Au^{*}\,\,\mbox{using}\left[A,J\rho(u^{*})J^{*}\right]_{\rho}=0\nonumber \\
 & = & \rho(u)Au^{*}.\label{eq:proof1}
\end{eqnarray}
As well,
\begin{eqnarray}
\rho(U)JAJ^{*}U^{*} & = & \rho(u)J\rho(u)J^{*}JAJ^{*}Ju^{*}J^{*}u^{*}\nonumber \\
 & = & \rho(u)J\rho(u)Au^{*}J^{*}u^{*}JJ^{*}\nonumber \\
 & = & \rho(u)J\rho(u)AJ^{*}u^{*}Ju^{*}J^{*}\,\,\mbox{using}\left[J^{*}u^{*}J,u^{*}\right]=0\,\nonumber \\
 & = & \rho(u)J\rho(u)J^{*}\rho(u^{*})JAu^{*}J^{*}\,\,\mbox{using}\left[A,J^{*}u^{*}J\right]_{\rho}=0\nonumber \\
 & = & J\rho(u)J^{*}\rho(u)\rho(u^{*})JAu^{*}J^{*}\,\,\mbox{using}\left[J\rho(u)J^{*},\rho(u)\right]=0\nonumber \\
 & = & J\rho(u)Au^{*}J^{*}.\label{eq:proof2}
\end{eqnarray}
Therefore, collecting \eqref{eq:proof0}, \eqref{eq:proof1} and
\eqref{eq:proof2}, one finds that a twisted covariant operator $D_A= D+ A + \epsilon' JAJ^*$
is mapped under a twisted gauge trasformation to 
  \begin{equation}
\rho(U)D_{A} U^*  = 
D+\rho(u)Au^{*}+\rho(u)\left[D,u^{*}\right]_{\rho}+\epsilon'J\left(\rho(u)Au^{*}+\rho(u)\left[D,u^{*}\right]_{\rho}\right)J^{*}\label{eq:5}
\end{equation}
which is nothing but $D + A^u + JA^uJ^*$ for $A^u$ the twisted gauge
transform \eqref{eq:7}.

\newpage

 \end{document}